\definecolor{gray}{gray}{0.5}
\mathchardef\mhyphen="2D
\newlength{\breite}
\newcommand{\class}[1]{\ensuremath{\mathsf{#1}}}
\DeclareMathOperator*{\lev}{\operatorname{lv}}
\DeclareMathOperator*{\var}{\operatorname{var}}
\DeclareMathOperator*{\ind}{\operatorname{ind}}
\DeclareMathOperator*{\range}{{\sf rng}}
\DeclareMathOperator*{\domain}{{\sf dom}}
\DeclareMathOperator*{\instantiate}{\textsf{inst}}
\DeclareMathOperator*{\rest}{{\sf restrict}}
\renewcommand{\restriction}{\mathord{\upharpoonright}}
\newcommand{\comprehension}[2]{\ensuremath{\left\{ {#1} \;|\; {#2}\right\}}}
\newcommand{\fovar}[1]{{#1}}
\newcommand{\fopre}[1]{{#1}}
\newcommand{\qrc}{\textsf{Q-Res}\xspace}
\newcommand{\qurc}{\textsf{QU-Res}\xspace}
\newcommand{\lqrc}{\textsf{LD-Q-Res}\xspace}
\newcommand{\lquprc}{\textsf{LQU}$^+$\textsf{-\hspace{1pt}Res}\xspace}
\newcommand{\irmc}{\textsf{IRM-calc}\xspace}
\newcommand{\ecalculus}{$\forall$\textsf{Exp+Res}\xspace}
\newcommand{\lqurc}{\textsf{LQU}\textsf{-Res}\xspace}
\newcommand{\epr}{\textsf{EPR}\xspace}
\newcommand{\for}{\textsf{FO-res}\xspace}
\newcommand{\irc}{\textsf{IR-calc}\xspace}
\newcommand{\dirc}{\textsf{D-IR-calc}\xspace}
\newenvironment{nscenter}
 {\parskip=0pt\par\nopagebreak\centering}
 {\par\noindent\ignorespacesafterend}
\tikzstyle{uedge}=[draw=blue!50!red]
\tikzstyle{fedge}=[draw=blue]
\tikzstyle{iedge}=[draw=red]
\tikzstyle{redge}=[draw=green!50!black]
\tikzstyle{rnode}=[draw,inner sep=2pt,color=black]
\tikzstyle{tnode}=[circle,minimum width=3pt,fill,inner sep=0pt]
\tikzstyle{dotnode}=[circle,minimum width=2pt,fill,inner sep=0pt]
\tikzstyle{labn}=[font=\sffamily,circle,fill=white,inner sep=1pt,draw=black]
\tikzstyle{legn}=[font=\scriptsize]
\tikzstyle{reln}=[circle,fill=white,inner sep=.4pt,draw=black]
\tikzstyle{oreln}=[circle,fill=white,inner sep=.4pt,draw=black!50,solid]
\tikzstyle{oree}=[thick,draw=black!50,densely dashed]
\tikzstyle{ree}=[thick,draw=black]
\tikzstyle{calcn-incomplete}=[rectangle%
\tikzstyle{expcalcn}=[rectangle%
\tikzstyle{calcn-unsound}=[rectangle%
\tikzstyle{expcalcn-unsound}=[rectangle%
\tikzstyle{calcn-nocolour}=[rectangle%
\tikzstyle{expcalcn-nocolour}=[rectangle%
\title{Lifting QBF Resolution Calculi to DQBF}  
\author{Olaf Beyersdorff\inst{1} \and Leroy Chew\inst{1} \and Renate A. Schmidt\inst{2} \and Martin Suda\inst{2} \email{o.beyersdorff@leeds.ac.uk, mm12lnc@leeds.ac.uk, Renate.Schmidt@manchester.ac.uk, martin.suda@manchester.ac.uk}}
\institute{
School of Computing, University of Leeds, UK
\and
School of Computer Science, University of Manchester, UK 
}
\begin{document}
\maketitle

\begin{abstract}
  We examine existing resolution systems for quantified Boolean formulas (QBF) and answer the question which of these calculi can be lifted to the more powerful Dependency QBFs (DQBF). An interesting picture emerges: While for QBF we have the strict chain of proof systems $\qrc < \irc < \irmc$, the situation is quite different in DQBF. The obvious adaptations of \qrc and likewise universal resolution are too weak: they are not complete. The obvious adaptation of \irc has the right strength: it is sound and complete. \irmc is too strong: it is not sound any more, and the same applies to long-distance resolution.     
Conceptually, we use the relation of DQBF to effectively propositional logic (\epr) and explain our new DQBF calculus based on \irc as a subsystem of first-order resolution.  
\end{abstract}

\section{Introduction}
The logic of dependency quantified Boolean formulas (DQBF) \cite{Peterson79} generalises the notion of quantified Boolean formulas (QBF) that allow Boolean quantifiers over a propositional problem.
DQBF is a relaxation of QBF in that the quantifier order is no longer necessarily linear and the dependencies of the quantifiers are completely specified.
This is achieved using Henkin quantifiers \cite{Henkin61}, usually put into a Skolem 
form.
DQBF is \class{NEXPTIME}-complete \cite{Azhar01lowerbounds},
compared to the \class{PSPACE}-completeness of QBF \cite{DBLP:conf/stoc/StockmeyerM73}.
Thus, unless the classes are equal, many problems that are difficult to express in QBF can be succinctly represented in DQBF. 

Recent developments in QBF proof complexity \cite{BWJ14,JM15,BCJ14,BCJ15,BCMS15,BBC16,BCMS16,DBLP:conf/sat/Egly12,Slivovsky-sat14, heule2014unified, heule2014efficient}
have increased our theoretical understanding of QBF proof systems and proof systems in general and have shown that there is an intrinsic link between proof complexity and SAT-, QBF-, and DQBF-solving.
Lower bounds in resolution-based proof systems give lower bounds to CDCL-style algorithms.
In propositional logic there is only one resolution system (although many subsystems have been studied \cite{Seg07,Urq95}),
but in QBF, resolution can be adapted in  different ways to get sound and complete calculi of varying strengths \cite{BCJ14,JM15,Gelder12,DBLP:series/faia/GiunchigliaMN09}. 

The first and best-studied QBF resolution system is \qrc introduced in \cite{DBLP:journals/iandc/BuningKF95}.
For \qrc there are two main enhanced versions: \qurc~\cite{Gelder12},
which allows resolution on universal variables, and \lqrc~\cite{DBLP:series/faia/GiunchigliaMN09},
which introduces a process of merging positive and negative universal literals under certain conditions.
These two concepts were combined into a single calculus \lquprc \cite{BWJ14}.

While these calculi model CDCL solving, another group of resolution systems were developed with the goal to express ideas from expansion solving. The first and most basic of these systems is \ecalculus \cite{JM15}, which also uses resolution, but is conceptually very different from \qrc.
In \cite{BCJ14} two further proof systems \irc and  \irmc are introduced,
which unify the CDCL- and expansion-based approaches in the sense that \irc simulates both \qrc and \ecalculus.
The system \irmc enhances \irc and additionally simulates \lqrc. 
The relative strength of these QBF resolution systems is illustrated in Fig.~\ref{fig:overview}. 

\begin{figure}[t]
\hspace*{-2cm}
\begin{tikzpicture}[scale=1.25]
\node[calcn-incomplete](n1) at (3,1) {{\sf Tree-}\qrc} ;
\node[calcn-incomplete](n2) at (3,2) {\qrc} ;
\node[expcalcn](n3) at (0,2) {\ecalculus} ;
\node[calcn-unsound](n4) at (2,3) {\lqrc} ;
\node[calcn-incomplete](n5) at (4,3) {\qurc} ;
\node[calcn-unsound](n6) at (3,4) {\lquprc} ;
\node[expcalcn](n8) at (0,3) {\irc} ;
\node[expcalcn-unsound](n9) at (0,4) {\irmc} ;
\draw(n4)--(n2);
\draw(n6)--(n4);
\draw(n2)--(n8);
\draw(n8)--(n3);
\draw (n3)--(n1)--(n2)--(n5);
\draw(n5)--(n6);
\draw(n8)--(n9)--(n4);
\draw[densely dashed](n2)--(n3);
\draw[densely dashed](n4)--(n5);
\draw[densely dashed](n8)--(n4);
\draw[densely dashed](n3.west)
.. controls (-2.5,4) and (0,5.5) .. (n6);

\draw[densely dashed](n5.north) .. controls (4.8,5) and (1,5.5) ..   (n9);


\draw (4.75,4.75)--(5.1,4.75); \node[legn,right] at(5.25,4.75) {\footnotesize{strictly stronger (p-simulates,}};
\node[legn,right] at(5.25,4.5) {\footnotesize{but exponentially separated)}};
\draw[densely dashed] (4.75,4)--(5.1,4); \node[legn,right] at(5.25,4) {\footnotesize{incomparable (mutual }};
\node[legn,right] at(5.25,3.75) {\footnotesize{exponential separations)}};
\node[expcalcn-nocolour,minimum height=4mm] at (4.925,3.15) {} ; \node[legn,right] at(5.25,3.15) {\footnotesize{expansion solving}};
\node[calcn-nocolour,minimum height=4mm] at (4.925,2.65) {} ; \node[legn,right] at(5.25,2.65) {\footnotesize{CDCL solving}};

\draw[red, line width=2pt](4.925,2) circle (1.5mm); \node[legn,right] at(5.25,2) {\footnotesize{unsound in DQBF}};

\draw[thick, color=green!50!black] (4.925,1.5) circle (1.5mm);
\node[legn,right] at(5.25,1.5) {\footnotesize{sound and complete in DQBF}};

\draw[thick, color=blue, dashed] (4.925,1) circle (1.5mm); 
\node[legn,right] at(5.25,1) {\footnotesize{incomplete in DQBF}};

\end{tikzpicture}
	\caption{The simulation order of QBF resolution systems \cite{BCJ15} and soundness/completeness of their lifted DQBF versions} 
	\label{fig:overview}
\end{figure}
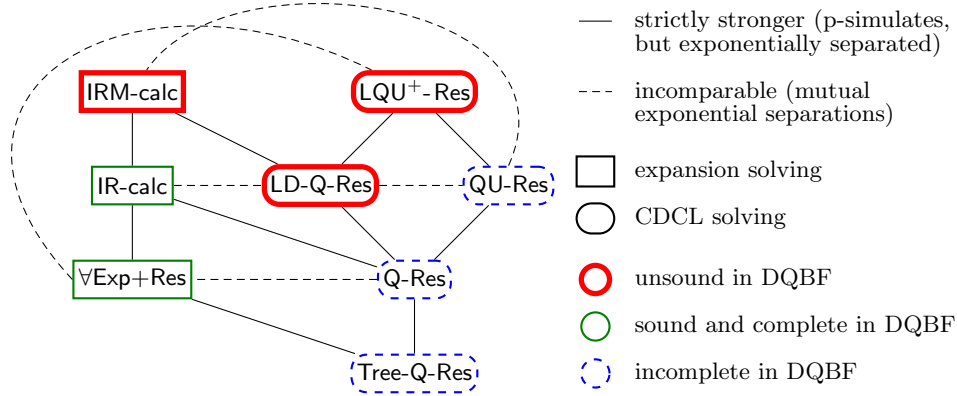

The aim of this paper is to clarify which of these QBF resolution systems can be lifted to DQBF. This is motivated both by the theoretical quest to understand which QBF resolution paradigms are robust enough to work in the more powerful DQBF setting, as well as from the practical perspective, where recent advances in DQBF solving \cite{DBLP:conf/sat/WimmerGNSB15,DBLP:conf/date/GitinaWRSSB15,DBLP:conf/sat/FinkbeinerT14,DBLP:conf/sat/FrohlichKBV14} prompt the question of how to model and analyse these solvers proof-theoretically.

Our starting point is the work of Balabanov, Chiang, and Jiang \cite{Balabanov201486}, who show that \qrc can be naturally adapted to a sound calculus for DQBF, but they show it is not strong enough and lacks completeness. Using an idea from \cite{BWJ14} we extend their result to \qurc, thus showing that the lifted version of this system to DQBF is not complete either.
We present an example showing that the lifted version of \lqrc is not sound, and this transfers to the DQBF analogues of the stronger systems \lquprc and \irmc.

While this rules out most of the existing QBF resolution calculi already---and in fact all CDCL-based systems (cf.\ Fig.~\ref{fig:overview})---we show that \irc, lifted in a natural way to a DQBF calculus \dirc, is indeed sound and complete for DQBF; and this holds as well for the lifted version of the weaker expansion system \ecalculus.

Conceptually, our soundness and completeness arguments use the known correspondence of QBF and DQBF to first-order logic \cite{PAAR-2012:qbf2epr_A_Tool_for_Generating_EPR_Formulas_from_QBF},
and in particular to the fragment \epr, also known as the Bernays-Sch\"{o}nfinkel class,
the universal fragment of first-order logic without function symbols of non-zero arity.
Similarly to DQBF, EPR is \class{NEXPTIME}-complete \cite{LEWIS1980317}. 
In addition to providing soundness and completeness this explains the semantics of both 
\irc and \dirc and identifies these systems as special cases of first-order resolution.

\section{Preliminaries}

A {\em literal} is a Boolean variable or its negation.
If~$l$ is a literal, $\lnot l$ denotes the complementary literal,
i.e., $\lnot \lnot x=x$.
A {\em clause} is a set of literals understood as their disjunction.
The empty clause is denoted by~$\bot$,
which is semantically equivalent to false.
A formula in {\em Conjunctive Normal Form} (CNF) is a
conjunction of clauses.  
For a literal $l=x$ or $l=\lnot x$, we write $\var(l)$ for~$x$ and extend this notation to $\var(C)$ for a clause $C$. 

A \emph{Dependency Quantified Boolean Formula (DQBF)} $\phi$ in prenex Sko\-lem form consists of a quantifier prefix $\Pi$ and a propositional matrix $\psi$.
QBF $\phi$ can also be written as $\Pi\cdot \psi$.
Here we mainly study DQBFs where $\psi$ is in CNF. The propositional variables of $\psi$ are partitioned into sets $Y$ and $X$.
We define $Y$ as the set of universal variables and $X$ the set of existential variables.
For every $ y\in Y$, $\Pi$ contains the quantifier $\forall y$.
For every $x\in X$ there is a predefined subset $Y_x\subseteq Y$ and 
$\Pi$ contains the quantifier $\exists x(Y_x)$.

The semantics of DQBF is defined in terms of Skolem functions.
A Skolem function $f_x: \{0,1\}^{Y_x} \rightarrow \{0,1\}$ 
describes the evaluation of an existential variable $x$ under the possible assignments to its dependencies $Y_x$.
Given a set $F$ of Skolem functions, where  $F=\comprehension{f_x}{x\in X}$ for all the existential variables 
and an assignment $\alpha : Y \rightarrow \{0,1\}$ for the universal variables,
the \emph{extension} of $\alpha$ by~$F$ is defined as $\alpha_F(x) = f_x(\alpha \restriction Y_x)$
for $x\in X$ and $\alpha_F(y) = \alpha(y)$ for $y \in Y$.
A DQBF $\phi$ is \emph{true} if there exist Skolem functions $F=\comprehension{f_x}{x\in X}$
for the existential variables such that for every assignment $\alpha : Y \rightarrow \{0,1\}$
to the universal variables the matrix $\psi$ propositionally evaluates to $1$
under the extension $\alpha_F$ of~$\alpha$ by $F$.

In QBF, the quantifier prefix is a sequence of standard quantifiers of the form $\exists x$ and $\forall y$. 
To see that this is a special case of DQBF, we use the sequence from left to right 
to assign to every variable in the prefix a unique index $\ind:X\cup Y\rightarrow \mathbb{N}$, 
and make every existential variable $x$ depend on all the preceding 
universal variables by setting $Y_x=\{y\in Y\mid \ind(y)<\ind(x)\}$.

\smallskip
We now give a brief overview of the main existing resolution-based calculi for QBF.
We start by describing the proof systems modelling 
\emph{CDCL-based QBF solving};
their rules are summarized in Fig.~\ref{fig:allrules}. The most basic and important 
system is \emph{Q-resolution (\qrc)} by
Kleine B{\"u}ning~et~al.~\cite{DBLP:journals/iandc/BuningKF95}. It is a resolution-like
calculus 
that operates on QBFs in prenex form with CNF matrix. In addition to the axioms,
\qrc comprises the resolution rule S$\exists$R and universal reduction $\forall$-Red (cf.\ Fig.~\ref{fig:allrules}).

\emph{Long-distance resolution (\lqrc)} appears originally in the work of Zhang and
Malik~\cite{DBLP:conf/iccad/ZhangM02}
and was formalized into a calculus by Balabanov and Jiang~\cite{DBLP:journals/fmsd/BalabanovJ12}.
It merges complementary literals of a universal variable~$u$
into the special literal~$u^*$.
These special literals prohibit certain resolution steps.
In particular, different literals of a universal variable~$u$
may be merged only if $\ind(x)<\ind(u)$, where~$x$ is the resolved variable.
\lqrc uses the rules  L$\exists$R, $\forall$-Red and $\forall$-Red$^*$. 


\emph{QU-resolution (\qurc)} \cite{Gelder12} removes the restriction from \qrc 
that the resolved variable must be an existential variable and allows resolution of universal variables.
The rules of \qurc are  S$\exists$R, S$\forall$R and $\forall$-Red. 
%
\emph{\lquprc} \cite{BWJ14} extends \lqrc by allowing short and long distance resolved literals to be universal;
however, the resolved literal is never a merged literal $z^*$. \lquprc uses the rules L$\exists$R, L$\forall$R, $\forall$-Red and $\forall$-Red$^*$. 

\begin{figure}[t]
\framebox{
	\parbox{\breite}
	{\small
		\begin{prooftree}
			\AxiomC{}
			\RightLabel{(Axiom)}
			\UnaryInfC{$C$}
			\DisplayProof\hspace{1cm}
			\AxiomC{$D\cup\{u\}$}
			\RightLabel{($\forall$-Red)}
			\UnaryInfC{$D$}
			\DisplayProof\hspace{1cm}
			\AxiomC{$D\cup\{u^*\}$}
			\RightLabel{($\forall$-Red$^*$)}
			\UnaryInfC{$D$}
		\end{prooftree}
		\begin{minipage}{0.99\linewidth}
			$C$ is a clause in the matrix. Literal $u$ is universal and $\ind(u)\geq\ind(l)$ for all $l\in D$.
		\end{minipage}
		\begin{prooftree}
			\AxiomC{$C_1\cup U_1\cup\{x\}$}
			\AxiomC{$C_2 \cup U_2\cup\{\lnot{x}\}$}
			\RightLabel{(Res)}
			\BinaryInfC{$C_1\cup C_2\cup U$}
		\end{prooftree}
		\begin{minipage}{0.99\linewidth}
			We consider four instantiations of the Res-rule:
\begin{description}
			\item[S$\exists$R:]  $x$ is existential. If $z\in C_1$, then $\lnot{z}\notin C_2$. $U_1=U_2=U=\emptyset$.
			\item[S$\forall$R:]  $x$ is universal. Otherwise same conditions as S$\exists$R.
			\item[L$\exists$R:]  $x$ is existential. If  $l_1\in C_1, l_2\in C_2$, $\var(l_1)=\var(l_2)=z$ then $l_1=l_2\neq z^*$.
			$U_1, U_2$ contain only universal literals with $\var(U_1)=\var(U_2)$.
			$\ind(x)<\ind(u)$ for each $u\in\var(U_1)$.
			If $w_1\in U_1, w_2\in U_2$, $\var(w_1)=\var(w_2)=u$ then $w_1=\lnot w_2$, $w_1=u^*$ or $w_2=u^*$. $U=\{u^* \mid u\in \var(U_1)\}$.
			\item[L$\forall$R:]  $x$ is universal. Otherwise same conditions as L$\exists$R.
\end{description}
		\end{minipage}
		\caption{The rules of CDCL-based proof systems}
		\label{fig:allrules}
	}
	}
	
\end{figure}

\smallskip
The second type of calculi models \emph{expansion-based QBF solving}. These calculi are 
based on \emph{instantiation} of universal variables:  
\ecalculus~\cite{JanotaSilva-SAT13}, \irc, and \irmc~\cite{BCJ14}.  All these
calculi operate on clauses that comprise only existential variables from the original QBF, which
are additionally \emph{annotated} by a substitution to some universal variables, e.g.\ $\lnot
x^{0/u_1 1/u_2}$.  
For any annotated literal $l^\sigma$, the substitution $\sigma$ must not make
assignments to variables at a higher quantification index than that of $l$, i.e., if
$u\in\domain(\sigma)$, then $u$ is universal and $\ind(u)<\ind(l)$.

To preserve this invariant we use the following definition. 
	Fix a DQBF $\Pi\cdot \psi$. Let $\tau$ be a partial assignment of the universal variables $Y$ to $\{0,1\}$
	and let $x$ be an existential variable. $\rest_x(\tau)$ is the assignment where $\domain(\rest_x(\tau))= \domain(\tau)\cap Y_x$ and $\rest_x(\tau)(u)=\tau(u)$. 
	


The simplest of the instantiation-based calculi we consider is \ecalculus from \cite{JM15} (cf.\ also \cite{BCJ14,BCJ15}).
%
%
The system \irc extends \ecalculus by enabling partial assignments in annotations. To 
do so, we utilize the auxiliary operation of \emph{instantiation}.
		We define $\instantiate_\tau(C)$ to be the clause containing all the literals $l^{\rest_{\var(l)}(\sigma)}$, 
		where $l^\xi\in C$ and $\domain(\sigma)=\domain(\xi)\cup \domain(\tau)$ and $\sigma(u)=\xi(u)$ if $u\in \domain(\xi) $ and $\sigma(u)=\tau(u)$ otherwise.
	

	\begin{figure}[t]
		\framebox{\parbox{\breite}
			{
				\begin{prooftree}
					\AxiomC{}
					\RightLabel{(Axiom)}
					\UnaryInfC{$\comprehension{x^{\rest_x(\tau)}}{x\in C, x\text{ is existential}} $}
				\end{prooftree}
				$C$ is a non-tautological clause from the matrix. $\tau=\comprehension{0/u}{u\text{ is universal in }C}$, where the notation $0/u$ for literals $u$ is shorthand for $0/x$ if $u=x$ and $1/x$ if $u=\neg x$.
				\begin{prooftree}
				\AxiomC{$\{x^\tau\}\cup C_1 $ }
				\AxiomC{$\{\lnot x^\tau\}\cup C_2 $}
					\RightLabel{(Resolution)}
					\BinaryInfC{$C_1\cup C_2$}
					\DisplayProof\hspace{1cm}
					\AxiomC{$C$}
					\RightLabel{(Instantiation)}
					\UnaryInfC{$\instantiate_\tau(C)$}
				\end{prooftree}
				\text{$\tau$ is an assignment to universal variables with $\range(\tau) \subseteq \{0,1\}$.}
				\caption{ The rules of \irc \cite{BCJ14} and of \dirc (Section~\ref{sec:dirc})}
				\label{fig:IRC}
			}}
		\end{figure}


The calculus \irmc from \cite{BCJ14} further extends \irc by enabling annotations containing $*$, similarly as in \lqrc.  

\section{Problems with lifting QBF calculi to DQBF}

There is no unique method for lifting calculi from QBF to DQBF.
However, we can consider `natural' generalisations of these calculi,
where we interpret index conditions as dependency conditions.
This means that when a proof system requires for an existential variable $x$ and a universal variable $y$
with $\ind(y)<\ind(x)$, this should be interpreted as $y\in Y_x$. Analogously $\ind(x)<\ind(y)$ should be interpreted as $y\notin Y_x$.
This approach was followed when taking Q-Resolution to D-Q-Resolution in Theorem~7 of \cite{Balabanov201486}. 
Balabanov et al. showed there that D-Q-Resolution is not complete for DQBF using some specific formula.
%
%
%
This formula is easily shown to be false, but no steps are possible in D-Q-Resolution, hence D-Q-Resolution is not complete \cite{Balabanov201486}. 
Consider now the following modification of that formula where the universal variables are doubled:
\begin{gather} \label{eq:counterex2}
\forall x_1 \forall x_1'  \forall x_2 \forall x_2' \exists y_1(x_1, x_1') \exists y_2(x_2, x_2')
\\ \notag
\begin{tabular}{lll}
$\{y_1, y_2, x_1, x_1'\}$&\hspace{0.2cm}&$\{\neg y_1, \neg y_2, x_1, x_1'\}$\\
$\{y_1, y_2, \neg x_1, \neg x_1', \neg x_2, \neg x_2'\}$&\hspace{0.2cm} &$\{\neg y_1, \neg y_2, \neg x_1, \neg x_1', \neg x_2, \neg x_2'\}$ \\
$\{y_1, \neg y_2, \neg x_1, \neg x_1', x_2, x_2'\}$&\hspace{0.2cm} &$\{ \neg y_1, y_2, \neg x_1, \neg x_1',  x_2,  x_2'\}$. 
\end{tabular}
\end{gather}
The falsity of \eqref{eq:counterex2} follows from the fact that
its hypothetical Skolem model 
would immediately yield a Skolem model for 
the original formula
using assignments with $x_1=x_1'$, $x_2=x_2'$.
But there is no such model because the original formula is false.
However, since we have doubled the universal literals we cannot perform any generalised \qurc steps to begin a refutation. This technique of doubling literals was first used in \cite{BWJ14}.

Now we look at another portion of the calculi from Fig.~\ref{fig:overview},
namely the calculi that utilise merging. 
As a specific example we consider \lqrc and show that it is not sound when lifted to DQBF in the natural way.


To do this we look at the condition of (L$\exists$R) given in Fig.~\ref{fig:allrules}.
Here instead of requiring $\ind(x)<\ind(u)$ as a condition for $u$ becoming merged, we require $u\notin Y_x$.
This is unsound as we show by the following DQBF:
\begin{nscenter}\begin{tabular}{cc}
		
		$\forall u \forall v \exists x(u) \exists y(v) \exists z(u,v).\,$&
		
		\begin{tabular}{lll}
			$\{x,v,z\}$&\hspace{0.2cm}&$\{\neg x, \neg v, z\}$\\
			$\{y, u, \neg z\}$&\hspace{0.2cm} &$\{\neg y, \neg u, \neg z\}$ 
		\end{tabular} 
	\end{tabular}
\end{nscenter}
Its truth is witnessed by the Skolem functions $x(u)=u$, $y(v)=\neg v$, and $z(u,v)=(u\wedge v) \vee (\neg u \wedge \neg v)$. 
However, the lifted version of \lqrc admits a refutation:
\begin{prooftree}
	\AxiomC{$\{x,v,z\}$}
	\AxiomC{$\{\neg x, \neg v, z\}$}
	\BinaryInfC{$\{v^*, z\}$}
	\AxiomC{$\{y, u , \neg z\}$}
	\AxiomC{$\{\neg y, \neg u, \neg z\}$}
	\BinaryInfC{$\{u^*,\neg z\}$}
	\BinaryInfC{$\{u^*, v^*\}$}
	\UnaryInfC{$\{u^*\}$}
	\UnaryInfC{$\bot$}
\end{prooftree}
This shows that \lqrc is unsound for DQBF. Likewise, since \irmc, \lqurc and \lquprc step-wise simulate \lqrc,
this proof would also be available, showing that these are all unsound calculi in the DQBF setting.

\section{A sound and complete proof system for DQBF}
\label{sec:dirc}

In this section we introduce the \dirc refutation system and prove its soundness and completeness for DQBF.
The calculus takes inspiration from \irc, a system for QBF \cite{BCJ14}, which in turn is inspired by first-order translations of QBF.
One such translation is to the \epr fragment, i.e., the universal fragment of 
first-order logic without function symbols of non-zero arity (this means we only allow constants).
We broaden this translation to DQBF and then bring this back down to \dirc in a similar way as in \irc.

We adapt annotated literals $l^\tau$ to DQBF, such that $l$ is an existential literal and $\tau$ 
is an annotation which is a partial assignment to universal variables in~$Y_x$.
In QBF, $Y_x$ contains all universal variables with an index lower than $x$,
and this is exactly the maximal range of the potential annotation to $x$ literals.
Thus our definition of annotated literals generalises those used in \irc.

The definitions of $\rest$ and $\instantiate$ were defined for QBF, but with dependency already in mind.
With these definitions at hand we can now define the new calculus \dirc.  
Its rules are exactly the same as the ones for \irc stated in Fig.~\ref{fig:IRC}, but applied to DQBF.

Before analysing \dirc further we present the translation of DQBF into \epr.
We use an adaptation of the translation described for QBF \cite{PAAR-2012:qbf2epr_A_Tool_for_Generating_EPR_Formulas_from_QBF},
which becomes straightforward in the light of the DQBF semantics based on Skolem functions.
The key observation is that for the intended two valued Boolean domain
the Skolem functions can be represented by predicates. 

To translate a DQBF $\Pi\cdot\psi$ we introduce on the first-order side
1) a predicate symbol $p$ of arity one and two constant symbols $0$ and $1$ to describe the Boolean domain,
2) for every existential variable $x \in X$ a 
 predicate symbol $x$ of arity~$|Y_x|$, and
3) for every universal variable $y \in Y$ a 
first-order variable $\fovar{y}$.

Now we can define a translation mapping $t_\Pi$.
It translates each occurrence of an existential variable $x$ with dependencies $Y_x = \{y_1,\ldots,y_k\}$
to the atom $t_\Pi(x) = x(\fovar{y_1},\ldots,\fovar{y_k})$ (we assume an arbitrary but fixed
order on the dependencies which dictates their placement as arguments)
and each occurrence of a universal variable $y$ to the atom $t_\Pi(y) = p(\fovar{y})$.
The mapping is then homomorphically extended to formulas: 
$t_\Pi(\neg \psi) = \neg t_\Pi(\psi)$,
$t_\Pi( \psi_1 \lor \psi_2) = t_\Pi(\psi_1) \lor t_\Pi(\psi_2)$,
and $t_\Pi( \psi_1 \land \psi_2) = t_\Pi(\psi_1) \land t_\Pi(\psi_2)$.
This means a CNF matrix $\psi$ is mapped to a corresponding first-order CNF $t_\Pi(\psi)$.
As customary, the first-order variables of $t_\Pi(\psi)$ are assumed to be implicitly universally quantified
at the top level.


\begin{lemma} 
\label{lem:basecase}
A DQBF $\Pi\cdot\psi$ is true if and only if $t_\Pi(\psi) \land p(1) \land \neg p(0)$ is satisfiable.
\end{lemma}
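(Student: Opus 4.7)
The plan is to prove both directions by explicitly constructing the correspondence between Skolem models for $\Pi\cdot\psi$ and first-order models for $t_\Pi(\psi) \land p(1) \land \neg p(0)$. The atoms $p(1)$ and $\neg p(0)$ encode that the constants $0,1$ are distinct and that $p$ behaves as the indicator for the ``true'' Boolean value; this is precisely what is needed to read off Boolean-valued Skolem functions from first-order predicates and vice versa.

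For the forward direction, I would start from Skolem functions $F=\{f_x\mid x\in X\}$ witnessing that $\Pi\cdot\psi$ is true and build a first-order structure $M$ with two-element domain $\{0,1\}$, interpreting the constants $0,1$ by themselves, interpreting $p$ so that $p^M(a)$ holds iff $a=1$, and interpreting each existential predicate symbol $x$ of arity $|Y_x|$ by $x^M(a_1,\dots,a_k) \Leftrightarrow f_x(a_1,\dots,a_k)=1$. Clearly $M\models p(1)\land\neg p(0)$. To show $M\models t_\Pi(\psi)$, fix any first-order assignment $\alpha'$ of the variables $\fovar{y}$ ($y\in Y$) to $\{0,1\}$ and let $\alpha:Y\to\{0,1\}$ be its Boolean counterpart. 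A straightforward induction on the structure of $\psi$ shows that $t_\Pi(\psi)$ evaluates under $\alpha'$ in $M$ exactly as $\psi$ evaluates under $\alpha_F$: each existential atom $x$ becomes $x(\fovar{y_1},\dots,\fovar{y_k})$, whose truth in $M$ is $f_x(\alpha(y_1),\dots,\alpha(y_k)) = \alpha_F(x)$, and each universal atom $y$ becomes $p(\fovar{y})$, true in $M$ iff $\alpha(y)=1$. Hence $M$ is a model of the entire first-order conjunction.

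For the backward direction, suppose $M\models t_\Pi(\psi)\land p(1)\land\neg p(0)$ with arbitrary domain $D$. Write $\mathbf{0}=0^M$ and $\mathbf{1}=1^M$; by $p(1)\land\neg p(0)$ these are distinct. For each existential $x$ with $Y_x=\{y_1,\dots,y_k\}$, define the Skolem function $f_x:\{0,1\}^{Y_x}\to\{0,1\}$ by $f_x(a_1,\dots,a_k)=1$ iff $M\models x(\mathbf{a_1},\dots,\mathbf{a_k})$, where $\mathbf{a_i}=\mathbf{1}$ if $a_i=1$ and $\mathbf{a_i}=\mathbf{0}$ otherwise. Given any Boolean assignment $\alpha:Y\to\{0,1\}$, pick the corresponding first-order assignment sending $\fovar{y}\mapsto\mathbf{1}$ if $\alpha(y)=1$ and to $\mathbf{0}$ otherwise. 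By the same atom-by-atom correspondence as above, $\psi$ under $\alpha_F$ has the same truth value as $t_\Pi(\psi)$ under this chosen first-order assignment in $M$; since $M$ satisfies $t_\Pi(\psi)$ for \emph{all} assignments, in particular this one, $\psi$ evaluates to $1$ under $\alpha_F$, so $F$ is a Skolem model of $\Pi\cdot\psi$.

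The only delicate point is the backward direction, where the first-order domain $D$ may be strictly larger than $\{\mathbf{0},\mathbf{1}\}$ and where $p^M$ might behave arbitrarily on $D\setminus\{\mathbf{0},\mathbf{1}\}$. The construction sidesteps this by using only the two ``canonical'' witnesses $\mathbf{0},\mathbf{1}$ when extracting Skolem values and when instantiating the implicit universal quantifiers over the $\fovar{y}$; validity in $M$ over \emph{all} assignments is stronger than needed and thus ensures the Boolean sub-instantiations all hold. The remaining work is the routine induction showing that $t_\Pi$ commutes with the Boolean connectives under the two matching assignments, which follows immediately from the homomorphic definition of $t_\Pi$ on $\neg,\lor,\land$.
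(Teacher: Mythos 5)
Your proposal is correct and follows essentially the same route as the paper's proof sketch: both rest on the correspondence $f_x(\vec{v})=1$ iff the atom $x(\vec{v})$ holds in the model, followed by structural induction on $\psi$ to show the two evaluations agree. The only divergence is in the backward direction, where the paper invokes Herbrand's theorem to pass to a Herbrand model over the base $\{0,1\}$, whereas you work directly with an arbitrary model by evaluating the predicates only at the denotations of the constants $0$ and $1$ --- a slightly more self-contained variant of the same idea, and your remark about the domain possibly being larger than two elements correctly identifies and handles the one point Herbrand's theorem is used to sidestep.
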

\begin{proof}[Idea]
When the DQBF $\Pi\cdot\psi$ is true, this is witnessed by the existence of Skolem functions 
$F=\comprehension{f_x}{x\in X}$. On the other hand, if $t_\Pi(\psi) \land p(1) \land \neg p(0)$
is satisfiable then we can by Herbrand's theorem assume it has a Herbrand model $H$ over the base $\{0,1\}$.
We can naturally translate between one and the other by setting 
$f_x(\vec{v}) = 1 \text{ iff } x(\vec{v}) \in H$ for every $x \in X$ and $\vec{v} \in \{0,1\}^{|Y_x|}$.
The lemma then follows by structural induction over $\psi$.
\qed
\end{proof}

For the purpose of analysing \dirc, the mapping $t_\Pi$ is further extended to annotated literals:
$t_\Pi(x^\tau) = t_\Pi(x)\tau$ for an existential variable $x$.
Here we continue to slightly abuse notation and treat $\tau$,
an annotation in the propositional context,
as a first-order substitution over the corresponding translated variables in the first-order context
(recall point 3) above).


We aim to show soundness and completeness of \dirc by
relating it via the above translation to a first-order resolution calculus \for.
This calculus consists of 1) a lazy grounding rule: given a clause $C$ and a substitution $\sigma$ derive the instance $C\sigma$,
and 2) the resolution rule: given two clauses $C \cup \{l\}$ and $D \cup \{\neg l\}$, where $l$ is a first-order literal,
derive $C \cup D$. Note that similarly to propositional clauses, we understand 
first-order clauses as \emph{sets} of literals. Thus we do not need any explicit factoring rule.
Also note that we require the resolved literals of the two premises of the resolution rule to be 
equal (up to the polarity). Standard first-order resolution, which involves \emph{unification} of the resolved literals,
can be simulated in \for by combining the instantiation and the resolution rule. It is clear that \for is sound and complete for first-order logic.

Our argument for the soundness of \dirc is now the following.
Given $\pi= (L_1, L_2, \dots, L_\ell)$, a \dirc derivation of the empty clause $L_\ell = \bot$
from DQBF $\Pi\cdot\psi$, we show by induction that $t_\Pi(L_n)$ is derivable
from $\Psi = t_\Pi(\psi) \land p(1) \land \neg p(0)$ by \for for every $n\leq \ell$.
Because $t_\Pi(\bot)=\bot$ is unsatisfiable, so must $\Psi$ be, by soundness of \for
and therefore $\Pi\cdot\psi$ is false by Lemma~\ref{lem:basecase}.

We need to consider the three cases by which a clause is derived in \dirc.
First, it is easy to verify that \dirc instantiation by an annotation 
$\tau$ corresponds to \for instantiation by $\tau$ as a substitution,
i.e., $t_\Pi(\instantiate_\tau(C)) = t_\Pi(C)\tau.$
Also the \dirc and \for resolution rules correspond one to one in an obvious way.
Thus the most interesting case concerns the Axiom rule.

Intuitively, the Axiom rule of \dirc removes universal variables from a clause while
recording their past presence (and polarity) within the applied annotation $\tau$. We simulate
this step in \for by first instantiating the translated clause by $\tau$
and then resolving the obtained clause with the unit $p(1)$ and/or $\lnot p(0)$.
Here is an example for a DQBF prefix $\Pi = \forall u\, \forall v\, \forall w\, \exists x(u,v)\, \exists y(v,w)$:
\begin{prooftree}
	\AxiomC{$\{x , y , \neg u , v \}$  }
	\LeftLabel{(\dirc)}
	\UnaryInfC{$\{x^{1/u,0/v} , y^{0/v}\}$}
	\DisplayProof 
	\AxiomC{$\{ \fopre{x}(\fovar{u},\fovar{v}) , \fopre{y}(\fovar{v},\fovar{w}) , \neg p(\fovar{u}) , p(\fovar{v})\}$}
\quad
	\LeftLabel{(\for)}
	\UnaryInfC{$\{ \fopre{x}(1,0) , \fopre{y}(0,\fovar{w})\}$}
\end{prooftree}

\begin{theorem} \label{thm:soundness}
       \dirc is sound.
\end{theorem}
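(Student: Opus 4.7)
The plan is exactly the strategy sketched in the paragraph preceding the theorem. Given a \dirc refutation $(L_1,\ldots,L_\ell)$ of $\Pi\cdot\psi$ with $L_\ell = \bot$, I will prove by induction on $n \leq \ell$ that $t_\Pi(L_n)$ is \for-derivable from $\Psi = t_\Pi(\psi) \land p(1) \land \lnot p(0)$. Since $t_\Pi(\bot) = \bot$, soundness of \for then forces $\Psi$ to be unsatisfiable, and Lemma~\ref{lem:basecase} yields that $\Pi\cdot\psi$ is false. The induction splits into one case per \dirc rule.

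The resolution case is immediate: $t_\Pi$ sends $\{x^\tau\} \cup C_1$ and $\{\lnot x^\tau\} \cup C_2$ to first-order clauses whose translations differ exactly in the complementary literals $t_\Pi(x^\tau)$ and $\lnot t_\Pi(x^\tau)$, so a single \for resolution step produces $t_\Pi(C_1 \cup C_2)$. The instantiation case reduces to the identity $t_\Pi(\instantiate_\tau(C)) = t_\Pi(C)\tau$, where $\tau$ is reinterpreted as the first-order substitution $\fovar{u} \mapsto \tau(u)$. This is checked literal by literal: for $l^\xi \in C$ the definition of $\instantiate_\tau$ combines $\xi$ and $\tau$ and then restricts to $Y_{\var(l)}$, while on the first-order side the variables outside $Y_{\var(l)}$ simply do not occur in $t_\Pi(l)$, so the two results coincide.

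The only case requiring real work is the axiom rule. From a non-tautological matrix clause $C$ with $\tau = \{0/u \mid u \text{ universal in } C\}$, \dirc produces $A = \{x^{\rest_x(\tau)} \mid x \in C,\ x \text{ existential}\}$. To simulate this in \for, I start from the translated matrix clause $t_\Pi(C) \in \Psi$, apply one lazy-grounding step with the substitution $\tau$ (turning each universal atom $p(\fovar{u})$ into $p(0)$ when $u$ occurs positively in $C$ and $p(1)$ when $u$ occurs negatively), and then iteratively resolve against the units $\lnot p(0)$ and $p(1)$ of $\Psi$ to strip away these ground literals one by one. Because $C$ is non-tautological, the opposite polarity pattern cannot occur and no ground literal is blocked. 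The surviving literals are exactly $t_\Pi(x)\tau = t_\Pi(x^{\rest_x(\tau)})$ for each existential $x \in C$, which is precisely $t_\Pi(A)$.

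The main obstacle is purely the bookkeeping in the axiom case: verifying that the grounding-plus-unit-resolution construction produces $t_\Pi(A)$ with neither missing existential literals nor leftover ground literals, and that the non-tautology side-condition of the axiom rule is exactly what rules out the pathological polarity clashes. Once this routine verification is carried out, the three cases together complete the induction and establish soundness of \dirc.
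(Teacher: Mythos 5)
Your proposal is correct and follows essentially the same route as the paper: induction over the \dirc derivation, translating each line via $t_\Pi$ into a \for derivation from $t_\Pi(\psi) \land p(1) \land \neg p(0)$, with resolution and instantiation corresponding directly and the Axiom rule simulated by lazy grounding with $\tau$ followed by unit resolution against $p(1)$ and $\neg p(0)$. The details you supply (the identity $t_\Pi(\instantiate_\tau(C)) = t_\Pi(C)\tau$ and the role of the non-tautology condition in making $\tau$ well-defined) match the paper's sketch.
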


We now show completeness. Let $\Pi\cdot\psi$ be a false DQBF and 
let us consider $\mathcal{G}(t_\Pi(\psi))$, the set of all ground instances of clauses in $t_\Pi(\psi)$.
Here, by a ground instance of a clause $C$ we mean the clause $C\sigma$
for some substitution $\sigma:\var(C)\rightarrow \{0,1\}$.
By the combination of Lemma~\ref{lem:basecase} and Herbrand's theorem, 
$\mathcal{G}(t_\Pi(\psi)) \land p(1) \land \neg p(0)$ is unsatisfiable
and thus it has a \for refutation. Moreover, by completeness of \emph{ordered}
resolution \cite{DBLP:books/el/RV01/BachmairG01}, we can assume that 1) the refutation 
does not contain clauses subsumed by $p(1)$ or $\neg p(0)$, and  
2) any clause containing the predicate $p$ is resolved on a literal containing $p$.
From this it is easy to see that any leaf in the refutation gives rise 
(in zero, one or two resolution steps with $p(1)$ or $\neg p(0)$)
to a clause $D = t_\Pi(C)$ where $C$ can be obtained by \dirc Axiom from a $C_0 \in \psi$.
The rest of the refutation consists of \for resolution steps 
which can be simulated by \dirc. Thus we obtain the following.

\begin{theorem} \label{thm:completeness}
	\dirc is refutationally complete for DQBF.
\end{theorem}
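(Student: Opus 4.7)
The plan is to reverse the soundness argument: from a \for-refutation of the translated first-order formula, construct a \dirc-refutation of the original DQBF. Assume $\Pi\cdot\psi$ is false. By Lemma~\ref{lem:basecase}, the first-order formula $\Psi = t_\Pi(\psi) \land p(1) \land \neg p(0)$ is unsatisfiable, and by Herbrand's theorem its ground expansion $\mathcal{G}(t_\Pi(\psi)) \land p(1) \land \neg p(0)$ is already unsatisfiable over the two-element base $\{0,1\}$. Hence this ground formula admits a \for-refutation consisting only of resolution steps on ground clauses, with no non-trivial unification required.

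To align the refutation with \dirc I refine it via ordered resolution, choosing an ordering in which every atom of the form $p(\cdot)$ is maximal. One may then assume (i) no derived clause is subsumed by $p(1)$ or $\neg p(0)$, and (ii) any $p$-atom appearing in the refutation is resolved out against one of these two units. Condition (i) forces the polarity of $\sigma$ on every universal literal in a surviving leaf $t_\Pi(C_0)\sigma$ to be the one that falsifies that literal --- a positive $u$ is mapped to $0$, a negative $u$ to $1$ --- which is exactly the annotation $\tau$ generated by the \dirc Axiom rule on $C_0$. After at most two resolution steps against $p(1)$ and $\neg p(0)$ to strip the $p$-atoms, what remains is $t_\Pi(C)$ where $C$ is obtained from $C_0$ by a \dirc Axiom, followed if necessary by one Instantiation to fix the values of universals in $Y_x$ that did not occur in $C_0$.

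The internal steps of the refutation then resolve on atoms of the form $x(\vec v)$, which correspond one-to-one to annotated existential literals $x^\tau$ in \dirc, so every \for-resolution simulates as a \dirc-Resolution on the matching annotated literal. Composing these translations along the refutation tree yields a \dirc-refutation of $\psi$. The delicate step will be the leaf analysis: one must verify that ordered resolution really eliminates every leaf in which $\sigma$ satisfies a universal literal of $C_0$, and that after the clean-up no residual $p$-literal or annotation mismatch remains between the Herbrand instance and the Axiom output. Tautological matrix clauses may be discarded outright, so the Axiom's non-tautology side condition is not an obstacle. Once the leaf correspondence is pinned down, the induction over the refutation delivers the theorem.
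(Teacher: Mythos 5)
Your proposal is correct and follows essentially the same route as the paper: ground the translated formula via Herbrand's theorem, normalise the ground \for-refutation using ordered resolution so that $p$-literals are either subsumed away or resolved against the units $p(1)$ and $\neg p(0)$, match the surviving leaves to \dirc Axiom (plus Instantiation) outputs, and simulate the remaining ground resolution steps one-to-one. Your explicit observation that a further Instantiation step may be needed to fix dependencies of $x$ not occurring in $C_0$ is a correct detail that the paper leaves implicit.
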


 Although one can lift the above argument with ordered resolution to show that
the set 
$\comprehension{t_\Pi(C)}{C \text{ follows by Axiom from some } C_0 \in \psi} $
is unsatisfiable for any false DQBF $\Pi\cdot\psi$, we have shown how to simulate 
\emph{ground} \for steps by \dirc. That is because a lifted \for derivation may contain 
instantiation steps which \emph{rename variables apart} for which a subsequent 
resolvent cannot be represented in \dirc. 
An example is the resolvent $\{\fopre{y}(\fovar{v}) , \fopre{z}(\fovar{v}')\}$
of clauses $\{\fopre{x}(\fovar{u}) , \fopre{y}(\fovar{v})\}$ and $\{\lnot \fopre{x}(\fovar{u}) , \fopre{z}(\fovar{v}')\}$
which is obviously stronger than the clause $\{\fopre{y}(\fovar{v}) , \fopre{z}(\fovar{v})\}$.
However, only the latter has a counterpart in \dirc. 

We also remark that in a similar way we can also lift to DQBF the QBF calculus \ecalculus from \cite{JM15}.
It is easily verified that the simulation of \ecalculus by \irc shown in \cite{BCJ14} directly transfers from QBF to DQBF.
Hence Theorem~\ref{thm:soundness} immediately implies the soundness of \ecalculus lifted to DQBF.  
Moreover, because all ground instances are also available in \ecalculus lifted to DQBF,
this system is also complete as can be shown by repeating the argument of Theorem~\ref{thm:completeness}.
  
\subsubsection*{Acknowledgments.}
This research was  supported by grant no.\ 48138 from the John Templeton Foundation, 
EPSRC grant EP/L024233/1, and 
a Doctoral Training Grant from the EPSRC (2nd author).

Martin Suda was supported by the EPSRC grant EP/K032674/1 and the ERC Starting Grant 2014 SYMCAR 639270.

\clearpage
\bibliographystyle{splncs03}
\bibliography{compl,refs}

\begin{thebibliography}{10}
\providecommand{\url}[1]{\texttt{#1}}
\providecommand{\urlprefix}{URL }

\bibitem{Azhar01lowerbounds}
Azhar, S., Peterson, G., Reif, J.: Lower bounds for multiplayer non-cooperative
  games of incomplete information. Journal of Computers and Mathematics with
  Applications  41,  957--992 (2001)

\bibitem{DBLP:books/el/RV01/BachmairG01}
Bachmair, L., Ganzinger, H.: Resolution theorem proving. In: Robinson, J.A.,
  Voronkov, A. (eds.) Handbook of Automated Reasoning, pp. 19--99. Elsevier and
  MIT Press (2001)

\bibitem{Balabanov201486}
Balabanov, V., Chiang, H.J.K., Jiang, J.H.R.: Henkin quantifiers and boolean
  formulae: A certification perspective of {DQBF}. Theoretical Computer Science
   523,  86 -- 100 (2014)

\bibitem{DBLP:journals/fmsd/BalabanovJ12}
Balabanov, V., Jiang, J.H.R.: Unified {QBF} certification and its applications.
  Formal Methods in System Design  41(1),  45--65 (2012)

\bibitem{BWJ14}
Balabanov, V., Widl, M., Jiang, J.H.R.: {QBF} resolution systems and their
  proof complexities. In: SAT. pp. 154--169 (2014)

\bibitem{BBC16}
Beyersdorff, O., Bonacina, I., Chew, L.: Lower bounds: From circuits to {QBF}
  proof systems. In: Proc.\ ACM Conference on Innovations in Theoretical
  Computer Science (ITCS'16). pp. 249--260. ACM (2016)

\bibitem{BCJ14}
Beyersdorff, O., Chew, L., Janota, M.: On unification of {QBF} resolution-based
  calculi. In: MFCS, II. pp. 81--93 (2014)

\bibitem{BCJ15}
Beyersdorff, O., Chew, L., Janota, M.: Proof complexity of resolution-based
  {QBF} calculi. In: Proc.\ Symposium on Theoretical Aspects of Computer
  Science. pp. 76--89. LIPIcs series (2015)

\bibitem{BCMS15}
Beyersdorff, O., Chew, L., Mahajan, M., Shukla, A.: Feasible interpolation for
  {QBF} resolution calculi. In: ICALP. Springer (2015)

\bibitem{BCMS16}
Beyersdorff, O., Chew, L., Mahajan, M., Shukla, A.: Are short proofs narrow?
  {QBF} resolution is not simple. In: Proc.\ Symposium on Theoretical Aspects
  of Computer Science (STACS'16) (2016)

\bibitem{DBLP:conf/sat/Egly12}
Egly, U.: On sequent systems and resolution for {QBFs}. In: Cimatti, A.,
  Sebastiani, R. (eds.) SAT. vol. 7317, pp. 100--113. Springer (2012)

\bibitem{DBLP:conf/sat/FinkbeinerT14}
Finkbeiner, B., Tentrup, L.: Fast {DQBF} refutation. In: Sinz and Egly
  \cite{DBLP:conf/sat/2014}, pp. 243--251

\bibitem{DBLP:conf/sat/FrohlichKBV14}
Fr{\"{o}}hlich, A., Kov{\'{a}}sznai, G., Biere, A., Veith, H.: idq:
  Instantiation-based {DQBF} solving. In: Sinz and Egly
  \cite{DBLP:conf/sat/2014}, pp. 103--116

\bibitem{DBLP:conf/date/GitinaWRSSB15}
Gitina, K., Wimmer, R., Reimer, S., Sauer, M., Scholl, C., Becker, B.: Solving
  {DQBF} through quantifier elimination. In: Nebel, W., Atienza, D. (eds.)
  Proceedings of the 2015 Design, Automation {\&} Test in Europe Conference
  {\&} Exhibition, {DATE} 2015, Grenoble, France, March 9-13, 2015. pp.
  1617--1622. {ACM} (2015)

\bibitem{DBLP:series/faia/GiunchigliaMN09}
Giunchiglia, E., Marin, P., Narizzano, M.: Reasoning with quantified boolean
  formulas. In: Biere, A., Heule, M., van Maaren, H., Walsh, T. (eds.) Handbook
  of Satisfiability, Frontiers in Artificial Intelligence and Applications,
  vol. 185, pp. 761--780. IOS Press (2009)

\bibitem{Henkin61}
{Henkin}, L.: Some remarks on infinitely long formulas. In: Journal of Symbolic
  Logic, pp. 167--183. Pergamon Press (1961)

\bibitem{heule2014efficient}
Heule, M.J., Seidl, M., Biere, A.: Efficient extraction of skolem functions
  from qrat proofs. In: Formal Methods in Computer-Aided Design (FMCAD), 2014.
  pp. 107--114. IEEE (2014)

\bibitem{heule2014unified}
Heule, M.J., Seidl, M., Biere, A.: A unified proof system for qbf
  preprocessing. In: Automated Reasoning, pp. 91--106. Springer (2014)

\bibitem{JM15}
Janota, M., Marques{-}Silva, J.: Expansion-based {QBF} solving versus
  {Q}-resolution. Theor. Comput. Sci.  577,  25--42 (2015)

\bibitem{JanotaSilva-SAT13}
Janota, M., Marques-Silva, J.: On propositional {QBF} expansions and
  {Q}-resolution. In: J{\"a}rvisalo, M., {Van Gelder}, A. (eds.) SAT. pp.
  67--82. Springer (2013)

\bibitem{DBLP:journals/iandc/BuningKF95}
{Kleine B{\"u}ning}, H., Karpinski, M., Fl{\"o}gel, A.: Resolution for
  quantified {Boolean} formulas. Inf. Comput.  117(1),  12--18 (1995)

\bibitem{LEWIS1980317}
Lewis, H.R.: Complexity results for classes of quantificational formulas.
  Journal of Computer and System Sciences  21(3),  317 -- 353 (1980)

\bibitem{Peterson79}
Peterson, G.L., Reif, J.: Multiple-person alternation. In: Foundations of
  Computer Science, 1979., 20th Annual Symposium on. pp. 348--363 (Oct 1979)

\bibitem{Seg07}
Segerlind, N.: The complexity of propositional proofs. Bulletin of Symbolic
  Logic  13(4),  417--481 (2007)

\bibitem{PAAR-2012:qbf2epr_A_Tool_for_Generating_EPR_Formulas_from_QBF}
Seidl, M., Lonsing, F., Biere, A.: qbf2epr: A tool for generating {EPR}
  formulas from {QBF}. In: Fontaine, P., Schmidt, R.A., Schulz, S. (eds.)
  PAAR-2012. Third Workshop on Practical Aspects of Automated Reasoning. EPiC
  Series in Computing, vol.~21, pp. 139--148. EasyChair (2013)

\bibitem{DBLP:conf/sat/2014}
Sinz, C., Egly, U. (eds.): Theory and Applications of Satisfiability Testing -
  {SAT} 2014 - 17th International Conference, Held as Part of the Vienna Summer
  of Logic, {VSL} 2014, Vienna, Austria, July 14-17, 2014. Proceedings, Lecture
  Notes in Computer Science, vol. 8561. Springer (2014)

\bibitem{Slivovsky-sat14}
Slivovsky, F., Szeider, S.: Variable dependencies and {Q}-resolution. In: Sinz
  and Egly  \cite{DBLP:conf/sat/2014}, pp. 269--284

\bibitem{DBLP:conf/stoc/StockmeyerM73}
Stockmeyer, L.J., Meyer, A.R.: Word problems requiring exponential time:
  Preliminary report. In: Aho, A.V., Borodin, A., Constable, R.L., Floyd, R.W.,
  Harrison, M.A., Karp, R.M., Strong, H.R. (eds.) Proceedings of the 5th Annual
  {ACM} Symposium on Theory of Computing, April 30 - May 2, 1973, Austin,
  Texas, {USA}. pp. 1--9. {ACM} (1973)

\bibitem{Urq95}
Urquhart, A.: The complexity of propositional proofs. Bulletin of Symbolic
  Logic  1,  425--467 (1995)

\bibitem{Gelder12}
Van~Gelder, A.: Contributions to the theory of practical quantified {Boolean}
  formula solving. In: Milano, M. (ed.) CP. vol. 7514, pp. 647--663. Springer
  (2012)

\bibitem{DBLP:conf/sat/WimmerGNSB15}
Wimmer, R., Gitina, K., Nist, J., Scholl, C., Becker, B.: Preprocessing for
  {DQBF}. In: Heule, M., Weaver, S. (eds.) Theory and Applications of
  Satisfiability Testing - {SAT} 2015 - 18th International Conference, Austin,
  TX, USA, September 24-27, 2015, Proceedings. Lecture Notes in Computer
  Science, vol. 9340, pp. 173--190. Springer (2015)

\bibitem{DBLP:conf/iccad/ZhangM02}
Zhang, L., Malik, S.: Conflict driven learning in a quantified {Boolean}
  satisfiability solver. In: ICCAD. pp. 442--449 (2002)

\end{thebibliography}
\end{document}